\newcommand{\e}[1]{\{#1\}}
\newtheorem{theorem}{Theorem}
\newtheorem{corollary}{Corollary}[theorem]
\newtheorem{claim}{Claim}
\DeclareMathOperator{\diag}{diag}
\begin{document}

\title{Unitary equivalence between the Green's function and Schr\"odinger
approaches for quantum graphs}

\author{Fabiano M. Andrade}
\email{fmandrade@uepg.br}
\affiliation{
  Departamento de Matem\'{a}tica e Estat\'{i}stica,
  Universidade Estadual de Ponta Grossa,
  84030-900 Ponta Grossa-PR, Brazil
}

\author{Simone Severini}
\affiliation{
  Department of Computer Science,
  University College London,
  WC1E 6BT London, United Kingdom
}
\affiliation{
  Institute of Natural Sciences,
  Shanghai Jiao Tong University,
  Shanghai, China
}

\date{\today}

\begin{abstract}
In a previous work [Andrade \textit{et al.}, Phys. Rep. \textbf{647}, 1
(2016)], it was shown that the exact Green's function (GF) for an
arbitrarily large (although finite) quantum graph is given as a sum over
scattering paths, where local quantum effects are taken into account
through the reflection and transmission scattering amplitudes.
To deal with general graphs, two simplifying procedures were developed:
regrouping of paths into families of paths and the separation of a
large graph into subgraphs.
However, for less symmetrical graphs with complicated topologies as, for
instance, random graphs, it can become cumbersome to choose the
subgraphs and the families of paths.
In this work, an even more general procedure to construct the energy
domain GF for a quantum graph based on its adjacency matrix is
presented.
This new construction allows us to obtain the secular determinant,
unraveling a unitary equivalence between the scattering Schr\"odinger
approach and the Green's function approach.
It also enables us to write a trace formula based on the Green's function
approach.
The present construction has the advantage that it can be applied
directly for any graph, going from regular to random topologies.
\end{abstract}

\pacs{03.65.Nk, 03.65.Ge, 03.65.Sq}

\maketitle

The past decade witnessed a notable interest in the interplay between
quantum mechanics and graphs.
The area is very rich because its objectives go from tests in spin
chains as nanodevices to the explanation of natural phenomena as energy
transfer in biological systems.
General methods to deal with graphs are always very welcome because the
myriad of different topologies make it difficult to develop a unique
method that holds for all graphs.
In the context of quantum graphs
\cite{JCP.4.673.1936,JCP.21.1565.1953,CRASPSIM.296.793.1983,
PRL.79.4794.1997,AoP.274.76.1999,PRL.84.1427.2000},
a Green's function (GF) approach was first proposed in
\cite{JPA.36.545.2003} and explored in depth in
\cite{PR.647.1.2016}.
In the latter, to handle general quantum graphs of different
topologies, two simplification procedures were developed:
(i) the regrouping of infinite many scattering paths into finite
families of paths (FP)
and
(ii) the division of the graph into subgraphs, then solving each
subgraph individually by calculating effective scattering amplitudes,
and then connecting all the pieces altogether.
As described in \cite{PR.647.1.2016}, the GF construction
based on these two procedures is very general and useful.
However, for large graphs, less symmetrical graphs, graphs that
change the connections by some mechanism, or random graphs, it may
become really difficult to choose the subgraphs and to define the
FP.
Furthermore, although the final result is totally independent of the
choices of the FP, this choice is not unique, preventing, for
instance, the development of a general algorithm for the GF
construction.

In this paper, we aim to give an even more general and powerful method
for the GF construction for quantum graphs.
We shall show that the GF approach (GFA) presented here provides an
alternative derivation for the secular determinant, unraveling a unitary
equivalence between the GFA and the scattering Schr\"odinger approach
(SSA)
\cite{PSPM.77.291.2008,Book.2012.Berkolaiko,arXiv:1603.07356}.
Moreover, it  also provides another way to derive a trace formula for
quantum graphs
\cite{CRASPSIM.296.793.1983,PRL.79.4794.1997,AoP.274.76.1999,
AHP.10.189.2009,CM.447.175.2007}.

A \textit{graph} $X(V,E)$ is defined as a pair consisting of a set of
vertices $V(X)=\{1,\ldots,n\}$  and a set of edges
$E(X)=\{e_{1},\ldots,e_{l}\}$, where each edge is a pair of vertices
\cite{Book.2010.Diestel}.
The graph topology is described in terms of the adjacency matrix $A(X)$
of dimension $n \times n$ where the  $ij$th element $A_{i j}(X)$ is $1$
if $\{i,j\}\in E(X)$ and zero  otherwise.
Two vertices are neighbors whether they are connected by an edge.
The set $E_i=\left\{j:\{i,j\}\in E(X)\right\}$ is the neighborhood of
the vertex $i\in V(X)$.
We denote by $E_{i}^{k}= E_{i}\setminus \e{k}$ the set of neighbors
of the vertex $i$, but  with the vertex $k$ excluded.
The degree of $i$ is $d_i=|E_i|=\sum_{j=1}^{n} A_{ij}(X)$.
These definitions refer to \textit{discrete} graphs.
To discuss quantum graphs, it is necessary to equip the graphs with a
metric.
A \textit{metric graph} $\Gamma(V,E)$ is a graph in which is assigned a
positive length $\ell_{e_{s}}\in(0,+\infty)$ to each edge, thus defining
the set $\boldsymbol{\ell}=\{\ell_{e_{1}},\ldots,\ell_{e_{l}}\}$.
When a single ended edge $e_{s}$ is taken as semi-infinite
($\ell_{e_{s}} = + \infty$), it is called a ``lead.''
A \textit{quantum graph} is a metric graph in which it is possible to
define a Schr\"odinger operator along with appropriated boundary
conditions (BCs) at the vertices, or more formally, a triple
$\{\Gamma(V,E),H,\bold{bc}\}$ with $H$ a differential operator and
$\bold{bc}$ a set of BCs.
For the free Schr\"{o}dinger operator $H=-(\hbar^2/2m)d^2/dx^2$ it leads
us to the eigenvalue equation
\begin{equation}
  \label{eq:eigen}
  - \psi_{\e{i,j}}''(x) = k^2 \psi_{\e{i,j}}(x),
\end{equation}
where $k=\sqrt{2mE/\hbar^2}$, $m$ is the mass, $E$ is the energy, and
$\psi_{\e{i,j}}$ is the wave function on the edge $\e{i,j}$.
Hereafter we consider just simple connected graphs.

An important ingredient in the GFA for quantum
graphs is the individual scattering amplitudes defined at each one of
the graph vertices, in a such way that we can define a scattering matrix
$\boldsymbol{\sigma}_{j}$ for each vertex $j$ of the graph.
The scattering amplitudes are entirely determined by the BCs defined at
each vertex and the most general ones, which are consistent with quantum
flux conservation and fulfill the required condition of
self-adjointness, were discussed in \cite{JPA.32.595.1999}.
Without loss of generality, in an arbitrary
graph locally we can always treat a vertex $j$ with its edges as a star
graph.
A star graph on $n$ vertices, $S_n$, is a graph where one
central vertex has degree $n-1$ and all others vertices have degree
$1$.
Consider thus a star graph as the one depicted in Fig. \ref{fig:fig1}
and let
$\Psi(j)=\left(\psi_{\{j,1\}}(j),\ldots,\psi_{\{j,n\}}(j)\right)^{T}$.
The most general BC that are consistent with the self-adjoint condition
\cite{PR.647.1.2016} are totally defined by two $d_{j} \times d_{j}$
matrices $\mathcal{A}_{j}$ and $\mathcal{B}_{j}$ such that
\cite{JPA.32.595.1999}
\begin{equation}
  \label{eq:bc}
  \mathcal{A}_{j} \Psi(j)+ \mathcal{B}_{j} \Psi'(j) = 0,
\end{equation}
the matrix $\mathcal{A}\mathcal{B}^{*}$ is self-adjoint, and the
$d_{j} \times 2d_{j}$ matrix $\left(\mathcal{A}_j,\mathcal{B}_j\right)$
has the maximal rank $d_{j}$.
The scattering amplitudes associated with the BC \eqref{eq:bc}, can be
determined by considering a plane wave on the edge $\e{i,j}$ incident on
the vertex $j$ with degree $d_{j}$.
Thus the scattering solutions that satisfy the eigenvalue equation
\eqref{eq:eigen} are given by
\begin{align}
  \label{eq:wavef}
  \psi_{\e{i,j}}(x) = {}
  & e^{- i k x}
    + \sigma_{j}^{[\e{j,i},\e{i,j}]}(k) e^{i k x},\nonumber \\
  \psi_{\e{j,l}}(x) = {}
  &  \sigma_{j}^{[\e{j,l},\e{i,j}]}(k) e^{i k x}.
\end{align}
The quantities
$\sigma_{j}^{[\e{j,i},\e{i,j}]}(k)=r_{j}^{[\e{j,i},\e{i,j}]}(k)$
and
$\sigma_{j}^{[\e{j,p},\e{i,j}]}(k)=t_{j}^{[\e{j,p},\e{i,j}]}(k)$
are the reflection and transmission amplitudes at the vertex $j$,
respectively.
By applying the BC \eqref{eq:bc}, we have
\begin{equation}
  \label{eq:sigma}
  \boldsymbol{\sigma}_{j} (k) =  -
  (\mathcal{A}_{j} + i k \mathcal{B}_{j})^{-1}
  (\mathcal{A}_{j} - i k \mathcal{B}_{j}).
\end{equation}
\begin{figure}
  \centering
  \includegraphics*[width=0.30\columnwidth]{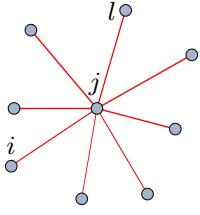}
  \caption{\label{fig:fig1}
    Locally, any graph looks like a star graph.}
\end{figure}
So, we can observe that the $\boldsymbol{\sigma}_{j}$ generally depends
on $k$ in a non-trivial manner. 
However, there are certain BCs that are independent of
$k$, as, for instance, the case of Dirichlet, Neumann, and Kirchoff
BCs \cite{AHP.10.189.2009}.
Thus we can see that for quantum graphs it is totally equivalent to set
either the BC or to specify the scattering matrix at the vertex $j$
\cite{JPA.32.595.1999}.
As said above, the $\boldsymbol{\sigma}_{j}(k)$ must satisfy the
requirement of quantum flux conservation, so it demands that the
$\boldsymbol{\sigma}_{j}(k)$ must be unitary,
$\boldsymbol{\sigma}_{j}(k)\boldsymbol{\sigma}_{j}^{\dagger}(k)
=\mathbbm{1}$,
and $\boldsymbol{\sigma}_{j}(k)=\boldsymbol{\sigma}_{j}^{\dagger}(-k)$,
leading to
\begin{align}
  \label{eq:sigma_rel}
  \sigma_{j}^{[\e{j,l},\e{i,j}]}(k) =
    \left[\sigma_{j}^{[\e{i,j},\e{j,l}]}(-k)\right]^{*},\nonumber \\
  \sum_{i \in E_{j}}
  \sigma_{j}^{[\e{j,l},\e{i,j}]}(k)
  \left[\sigma_{j}^{[\e{j,m},\e{i,j}]}(k)\right]^{*} =
  \delta_{lm}, \\
  \sum_{i \in E_{j}}
  \sigma_{j}^{[\e{i,j},\e{j,l}]}(k)
  \left[\sigma_{j}^{[\e{i,j},\e{j,m}]}(k)\right]^{*} =
    \delta_{lm}\nonumber,
\end{align}
which are natural generalizations of the usual relations for the
scattering amplitudes in 1D scattering problems
\cite{Book.1989.Chadan}.

Consider a quantum graph $\{\Gamma(V,E),H,\mathbf{bc}\}$ with the
adjcency matrix $A(\Gamma)$.
Then, add two leads $e_i$ and $e_f$ to the vertices $1$ and $n$,
respectively, as shown in Fig.  \ref{fig:fig2}, turning it into an
open quantum graph, suitable for studying scattering problems.
The exact scattering GF for a particle of fixed energy
$E=\hbar^2k^2/2m$, with initial position $x_i$ in the lead $e_i$ and
final position $x_f$ in the lead $e_{f}$, is given by a sum over all the
scattering paths (SP) connecting the points $x_i$ and $x_f$, where each
path is weighted by the product of the scattering amplitudes gained
along the path.
These scattering amplitudes are determined through the BCs defined at
the vertices.
Thus the exact scattering GF is written as
\cite{CM.415.201.2006,CM.447.175.2007} (see also
Ref. \cite{PR.647.1.2016})
\begin{equation}
  G(x_f,x_i;k) = \frac{m}{i\hbar^{2} k}
  \sum_{\rm SP} W_{\rm SP} e^{[\frac{i}{\hbar}S_{\rm SP}(x_f,x_i;k)]},
\end{equation}
where for each SP, $S_{\rm SP} = k L_{\rm SP}$ is the classical-like
action, with $L_{\rm SP}$  the total path length.
The term $W_{\rm SP}$ is the SP quantum amplitude, constructed
from the product of all quantum amplitudes $\sigma_{j}$ acquired
along the SP.

\begin{figure}
  \centering
  \includegraphics[width=0.9\columnwidth]{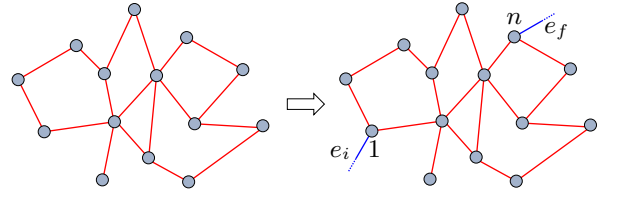}
  \caption{Graph with two leads added turning it into an open graph.}
  \label{fig:fig2}
\end{figure}

Our first goal is to rewrite the GF in a way that it is dependent on of
the underlying graph.
This will be achieved by using the adjacency matrix of the graph and the
following rules:
(i) for every vertex $j$ of the graph we define a scattering matrix
$\boldsymbol{\sigma}_{j}(k)$ associated with the BC used
at the vertex $j$,
(ii) the free propagation along the edge between two vertices $i$ and
$j$ contributes with the term $z_{ij} = z_{ji} = e^{i k \ell_{ij}}$,
where $\ell_{ij}$ is the length of the edge,
and
(iii)
in each edge between the vertices $i$ and $j$ we define two FP, one
going from $i$ to $j$ and another in the reverse direction.
They are given by
\begin{equation}
  \label{eq:pij}
  p_{ij} =
    \sum_{l \in {E_{j}^{n}}} z_{ij}
    \sigma_{j}^{[\e{j,l},\e{i,j}]}A_{jl}p_{jl}
    +\delta_{jn} z_{in}\sigma_{n}^{[e_f,\e{i,n}]},
\end{equation}
and the family $p_{ji}$ is given by the same expression above, but with
the swapping of indices $i$ and $j$.
Then, in each vertex $i$ we associated one $p_{ij}$ for every $j \in E_{i}$.
The last term in \eqref{eq:pij} is the transmission amplitude at the
vertex $n$ from the edge $\e{i,n}$ to the lead $e_f$.
So, using the above rules, the exact scattering GF for a quantum graph
with adjacency matrix $A(\Gamma)$ can be written as
\begin{equation}
  G_{\Gamma} = \frac{m}{i\hbar^{2} k}
  T_{\Gamma} e^{i k (x_{i}+x_{f})},
\end{equation}
where
$T_{\Gamma} = \sum_{j \in E_{i}} \sigma_{i}^{[\e{i,j},e_{i}]} A_{ij} p_{ij}$.
Thus we observe that, by employing the adjacency matrix of the graph, we
were able to replace an \textit{infinite} sum over SP by a
\textit{finite} sum over FP in a unique way (except for the possible
permutations of the adjacency matrix of the graph).
The number of FP is always finite.
For instance, in the fully connected simple graph on $n$ vertices,
$K_{n}$, the number of different FP is twice the number of edges,
$2\binom{n}{2}$.
We can use the Schur-Hadamard product \footnote{
Given two matrices $M$ and $N$ of the same order, the Schur-Hadamard
product is defined by $  (M \circ N)_{ij} := (M)_{ij}(N)_{ij}$.}
to know which FP need to be considered in a specific graph,
$P_{\Gamma}= P \circ A(\Gamma)$, where $P=(p_{ij})$ is an $n \times n$
matrix.
The main diagonal elements of $P_{\Gamma}$ are zero because no vertex is
connected to itself in simple graphs.
The FP altogether form a system of equations whose solution provides the
exact energy-dependent GF.
Once having obtained the exact GF, we have all the
possible information from a quantum system \cite{Book.2006.Economou}.
For instance, we can calculate the transmission probability for
transverse the graph as a function of the energy of the incident
particle, which can be used, for example, to study the presence of
resonances \cite{APPA.124.1087.2013}.
Indeed, $|T_{\Gamma}|^{2}$ represents the global transmission
probability from the lead $e_i$ to the lead $e_f$ and
it is constructed from the individual quantum amplitudes.
This kind of construction was already explored, although using a
different approach, in \cite{JPA.42.295205.2009,NPB.828.515.2010}.
Bound-state energies can be obtained from the poles of the GF and the
associated wave functions from the respective residues
\cite{PR.647.1.2016}.

The construction presented so far is for general quantum graphs.
Given the fact that star graphs can be employed as building blocks for
larger graphs \cite{Book.2012.Berkolaiko}, let us focus on the problem
of a quantum star graph, $S_n$.
Additionally, in order to simplify the notation, here and henceforth, we
drop the edge labels from the scattering amplitudes and just use $r$
($t$) for the reflection (transmission).
To prove the unitary equivalence between the SSA and the GFA for quantum
graphs, we start with the SSA by writing the general solutions for the
eigenvalue equation \eqref{eq:eigen} on the edges of the $S_{n}$:
\begin{align}
  \label{eq:psiSn}
  \psi_{\e{1,i}}(x) = {} & a_{1i} e^{i k x}+b_{1i} z_{1i} e^{-i k x},
\end{align}
$\forall i \in E_{1}$, where $a_{1i}$ and $b_{1i}$ are ($k$ dependent)
complex amplitudes (we label the central vertex as $1$).
By applying the BC \eqref{eq:bc} on the vertices of the
quantum star graph, we find
\begin{equation}
  \label{eq:evMSnS}
  U_{S_{n}}^{S}(k) \mathbf{a}_{S_{n}}= \mathbf{a}_{S_{n}},
\end{equation}
where $\mathbf{a}_{S_n}=(a_{12},b_{12},\ldots,a_{1n},b_{1n})^{T}$, and the
$2(n-1) \times 2(n-1)$ matrix $U_{S_{n}}^{S}(k)$ can be written as a
product of two matrices,
\begin{equation}
  \label{eq:MSnSSA}
  U_{S_n}^{S}(k)= S_{S_n}(k) D_{S_n}(k),
\end{equation}
with
$D_{S_n}(k) = \diag(z_{12},z_{12},z_{13},z_{13},\ldots,z_{1n},z_{1n})$
and
\begin{align}
  \label{eq:SSn}
  S_{S_{n}}(k)=
  \begin{pmatrix}
    0 & r_{2} & 0 & 0 &  \ldots & 0  & 0\\
    r_{1} & 0 & t_{1} & 0 &  \ldots & t_1  & 0\\
    0 & 0 & 0 & r_{3} &  \ldots  & 0 & 0\\
    t_{1} & 0 & r_{1} & 0 & \ldots & t_1 & 0\\
    \vdots & \vdots & \vdots & \vdots &\ddots & \vdots & \vdots\\
    0 & 0 & 0 & 0 &  \ldots & 0 & r_{n}\\
    t_1 & 0 & t_1 & 0 & \ldots & r_1 & 0
  \end{pmatrix},
\end{align}
The scattering amplitudes $r_i$, for $i \in E_{1}$, are given by
\eqref{eq:sigma} with $d_i=1$, while $r_1$ and  $t_1$ are given by
\eqref{eq:sigma} with $d_{1}=n-1$.
The edge propagation matrix $D_{S_{n}}(k)$ has the metric information of
the quantum star graph and the scattering matrix $S_{S_{n}}(k)$ has the
information of the scattering process at the vertices.
From the relations for the scattering amplitudes in
\eqref{eq:sigma_rel}, it follows that  $S_{S_{n}}(k)$ is unitary, and
the unitarity of $D_{S_{n}}(k)$ is direct.
Thus $U_{S_n}^{S}(k)$ is also unitary and it is referred to as
\textit{the quantum evolution map} \cite{AP.55.527.2006}.
The action of this map is a composition of a propagation along the edges
followed by a scattering process at the vertices.
The system \eqref{eq:evMSnS} has a nontrivial solution for the
wavenumber $k>0$, when
\begin{equation}
  \label{eq:secular_SnS}
 \zeta_{S_n}^{S}(k):=\det\left[\mathbbm{1} - U_{S_n}^{S}(k)\right] = 0,
\end{equation}
which is the \textit{secular determinant} and whose zeros define
the quantum star graph spectra.

\begin{figure}
  \centering
  \includegraphics[width=0.4\columnwidth]{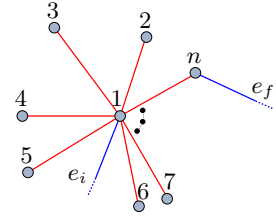}
  \caption{Star graph on $n$ vertices with two leads added turning it
    into an open star graph.}
  \label{fig:fig3}
\end{figure}

Now, consider the quantum star graph when we attach a lead $e_{i}$ to
the vertex $1$ and a lead $e_{f}$ to the vertex $n$
\footnote{In other words, we label our vertices in a such way that the
  entrance vertex is labeled $1$ and the exit vertex is labeled $n$.} as
depicted in Fig. \ref{fig:fig3}.
Thus the scattering GF is
\begin{equation}
  G_{S_{n}} =
  \frac{m}{i\hbar^2 k}
  T_{S_{n}} e^{ik(x_{i}+x_{f})},
\end{equation}
with
$T_{S_{n}}= \sum_{j \in E_{1}}t_{1}p_{1j}$, where the
families
\begin{align}
  \label{eq:system_PSn}
  p_{1j}= {} & z_{1j} (r_{1} p_{j1} + \delta_{jn} t_{n}), \nonumber\\
  p_{j1}= {} & z_{1j} (r_{1} p_{1j}
               +\sum_{i \in E_{1}^{1}} t_{1}p_{1i}),
\end{align}
form a system of $2(n-1)$ equations.
To compare with the SSA, we need to consider bound states.
This is accomplished by excluding the transmission at the vertex $n$ to
the lead $e_{f}$.
In this case, we can write \eqref{eq:system_PSn} as
\begin{equation}
  \label{eq:evSnG}
  U^{G}_{S_n}(k) \mathbf{p}_{S_n} = \mathbf{p}_{S_n},
\end{equation}
where
$\mathbf{p}_{S_n} = (p_{12},p_{21},\ldots,p_{1n},p_{n1})^{T}$ and
\eqref{eq:evSnG} has a non-trivial solution for $k>0$ if
\begin{equation}
  \label{eq:secular_SnG}
  \zeta_{S_n}^{G}(k): =
  \det\left[\mathbbm{1} - U_{S_n}^{G}(k)\right] = 0.
\end{equation}
Surprisingly, $U^{G}_{S_n}(k)$ can also be factored as a product of
$S_{S_n}(k)$ and $D_{S_n}(k)$, but in opposite order,
\begin{equation}
  U^{G}_{S_n}(k) = D_{S_{n}}(k) S_{S_n}(k).
\end{equation}
Thus it is also unitary.
In fact, $U^{G}_{S_n}(k)$ is the quantum evolution map, but now obtained
from the GFA.
The action of $U^{G}_{S_n}(k)$ is a composition of a
scattering at the vertices followed by a propagation along the
edges.
We have the following result for the eigenvalues of $U^{S}_{S_n}(k)$ and
$U^{G}_{S_n}(k)$.
\begin{theorem}
All the eigenvalues of the quantum evolution maps  $U^{S}_{S_n}(k)$ and
$U^{G}_{S_n}(k)$ are identical including the degeneracy.
\end{theorem}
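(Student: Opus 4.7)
The plan is to exploit the fact that the two evolution maps are products of the same two factors taken in opposite orders, namely $U^{S}_{S_n}(k)=S_{S_n}(k)D_{S_n}(k)$ and $U^{G}_{S_n}(k)=D_{S_n}(k)S_{S_n}(k)$, and then invoke the general linear-algebra principle that $AB$ and $BA$ have the same characteristic polynomial. Since the theorem asks for identical eigenvalues \emph{including degeneracies}, characteristic-polynomial equality is exactly what I want, because it encodes algebraic multiplicities.

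First I would observe that $D_{S_n}(k)$ is a diagonal matrix whose entries are $z_{1j}=e^{ik\ell_{1j}}$, all of which are nonzero (in fact unimodular) for real $k$; consequently $D_{S_n}(k)$ is invertible. This is the cleanest route because invertibility of one factor makes $AB$ and $BA$ not merely cospectral but genuinely similar. Specifically, I would write
\begin{equation*}
U^{G}_{S_n}(k)=D_{S_n}(k)S_{S_n}(k)=D_{S_n}(k)\bigl[S_{S_n}(k)D_{S_n}(k)\bigr]D_{S_n}(k)^{-1}=D_{S_n}(k)\,U^{S}_{S_n}(k)\,D_{S_n}(k)^{-1},
\end{equation*}
exhibiting $U^{G}_{S_n}(k)$ and $U^{S}_{S_n}(k)$ as similar matrices. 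Similar matrices have identical characteristic polynomials and identical Jordan structure, so their eigenvalues coincide with the same algebraic and geometric multiplicities, which is the statement of the theorem.

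For completeness I would also mention the fallback argument that does not require invertibility: for any two square matrices $A,B$ of the same size, $\det(\lambda\mathbbm{1}-AB)=\det(\lambda\mathbbm{1}-BA)$, a standard identity (proved, for instance, via a continuity/density argument from the invertible case, or via the Schur-complement identity applied to the block matrix $\bigl(\begin{smallmatrix}\lambda\mathbbm{1}&A\\ B&\mathbbm{1}\end{smallmatrix}\bigr)$). This would give the conclusion even without appealing to the unitarity of $D_{S_n}(k)$, and it is worth recording since it makes the equivalence between the SSA and GFA secular determinants $\zeta^{S}_{S_n}(k)=\det[\mathbbm{1}-U^{S}_{S_n}(k)]$ and $\zeta^{G}_{S_n}(k)=\det[\mathbbm{1}-U^{G}_{S_n}(k)]$ transparent: they are literally the same polynomial in the entries of $S_{S_n}$ and $D_{S_n}$ evaluated at $\lambda=1$.

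There is no real obstacle here; the only thing that might look subtle is making sure the claim about \emph{degeneracies} is covered, which is why I prefer the similarity argument over a bare statement of cospectrality. The similarity $U^{G}_{S_n}=D_{S_n}U^{S}_{S_n}D_{S_n}^{-1}$ preserves the full Jordan form, so every notion of multiplicity one might care about is automatically preserved, and the theorem follows in one line once the factorizations $U^{S}_{S_n}=SD$ and $U^{G}_{S_n}=DS$ established earlier in the paper are taken as given.
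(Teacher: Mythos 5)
Your proof is correct and rests on the same key step as the paper's: multiplying by the unitary factor $D_{S_n}(k)$, which is exactly the similarity $U^{G}_{S_n}=D_{S_n}U^{S}_{S_n}D_{S_n}^{-1}$ that the paper applies eigenvector-by-eigenvector. Packaging it as an explicit similarity transformation is arguably tidier, since it settles the degeneracy question (which the paper handles by a separate remark about orthogonality preservation) in one stroke.
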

\begin{proof}
Let $\mathbf{a}_{S_n}^{\lambda}$ be an eigenvector of the map
$U^{S}_{S_n}(k)$  with a nonzero eigenvalue $\lambda$,
$U^{S}_{S_n}(k) \mathbf{a}_{S_n}^{\lambda} =
\lambda  \mathbf{a}_{S_n}^{\lambda}$.
Given that $U^{S}_{S_n}(k)$ is a unitary map, all its
eigenvalues are nonzero and have modulus $1$.
Multiplying $D_{S_n}(k)$ on the left, we have
$ U^{G}_{S_n}(k) \left[D_{S_n}(k) \mathbf{a}_{S_n}^{\lambda}\right]
  = \lambda  \left[D_{S_n}(k)\mathbf{a}_{S_n}^{\lambda}\right]$.
$D_{S_n}(k)$ being unitary, $D_{S_n}(k)\mathbf{a}_{S_n}^{\lambda}$ is a
nonzero eigenvector of $U^{G}_{S_n}(k)$ with eigenvalue $\lambda$.
To complete our proof, we just reverse our reasoning for
$U^{G}_{S_n}(k)$.
The identical degeneracy of $U^{S}_{S_n}(k)$ and $U^{G }_{S_n}(k)$
is due to the fact that the unitary operators $S_{S_n}(k)$ and $D_{S_n}(k)$
preserve the orthogonality of the eigenvectors with the same eigenvalue.
\end{proof}
We then conclude that the secular determinants \eqref{eq:secular_SnS}
and \eqref{eq:secular_SnG} are equal, thus providing the same spectra.
The result above brings us to the following interesting and useful
result.
\begin{corollary}
The eigenvector $\mathbf{a}_{S_n}^{\lambda}$ with eigenvalue
$\lambda$, which are associated  with the wave function amplitudes,
Eq. \eqref{eq:psiSn}, can be obtained from the eigenvector
$\mathbf{p}_{S_n}^{\lambda}$  by
$\mathbf{a}_{S_n}=S_{S_n}(k) \mathbf{p}_{S_n}^{\lambda}$,
up to an arbitrary phase factor.
\end{corollary}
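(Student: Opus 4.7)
The plan is to exploit directly the two factorizations $U^{S}_{S_n}(k)=S_{S_n}(k)D_{S_n}(k)$ and $U^{G}_{S_n}(k)=D_{S_n}(k)S_{S_n}(k)$ already established above the Theorem. The key algebraic observation is that $S_{S_n}(k)$ intertwines the two maps: one computes $S_{S_n}(k)\,U^{G}_{S_n}(k)=S_{S_n}(k)D_{S_n}(k)S_{S_n}(k)=U^{S}_{S_n}(k)\,S_{S_n}(k)$. Starting from $U^{G}_{S_n}(k)\mathbf{p}_{S_n}^{\lambda}=\lambda\,\mathbf{p}_{S_n}^{\lambda}$ and left-multiplying by $S_{S_n}(k)$ therefore gives $U^{S}_{S_n}(k)\bigl[S_{S_n}(k)\mathbf{p}_{S_n}^{\lambda}\bigr]=\lambda\bigl[S_{S_n}(k)\mathbf{p}_{S_n}^{\lambda}\bigr]$, and the vector $S_{S_n}(k)\mathbf{p}_{S_n}^{\lambda}$ is non-zero because $S_{S_n}(k)$ is unitary.

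Next, I would invoke the preceding Theorem. Since the $\lambda$-eigenspaces of $U^{S}_{S_n}(k)$ and $U^{G}_{S_n}(k)$ have the same dimension, the linear map $\mathbf{p}\mapsto S_{S_n}(k)\mathbf{p}$ restricts to a bijection between them. In the non-degenerate case each eigenspace is one-dimensional, so $S_{S_n}(k)\mathbf{p}_{S_n}^{\lambda}$ must equal $\mathbf{a}_{S_n}^{\lambda}$ up to a scalar; unitarity of both maps together with the normalization convention for eigenvectors reduces this freedom to an overall phase. When $\lambda$ is degenerate the same freedom survives within each eigenspace, which is exactly what the ``arbitrary phase factor'' clause of the corollary accommodates once a basis is fixed on one side.

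A short consistency check then connects the abstract similarity to the physical meaning of \eqref{eq:psiSn}: reading off the rows of $S_{S_n}(k)$ as defined in \eqref{eq:SSn}, the identity $\mathbf{a}_{S_n}=S_{S_n}(k)\mathbf{p}_{S_n}^{\lambda}$ simply asserts that at every vertex the outgoing wave amplitudes $a_{1i}$ and $b_{1i}$ are obtained from the incoming FP amplitudes $p_{1j}$ and $p_{j1}$ by exactly the same scattering relations \eqref{eq:sigma} that produced $S_{S_n}(k)$ in the first place, which is reassuring but not logically needed.

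The main (and rather mild) obstacle is bookkeeping: one has to check that the ordering adopted for the components of $\mathbf{a}_{S_n}=(a_{12},b_{12},\ldots,a_{1n},b_{1n})^{T}$ and of $\mathbf{p}_{S_n}=(p_{12},p_{21},\ldots,p_{1n},p_{n1})^{T}$ is compatible with the row and column layout used for $S_{S_n}(k)$ in \eqref{eq:SSn}, so that the abstract intertwining relation $U^{S}_{S_n}S_{S_n}=S_{S_n}U^{G}_{S_n}$ translates into the precise component-wise identification claimed. Once this is verified the corollary follows with no further work.
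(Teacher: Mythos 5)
Your proposal is correct and follows essentially the same route as the paper: the corollary is an immediate byproduct of the theorem's proof, where left-multiplying $U^{G}_{S_n}(k)\mathbf{p}_{S_n}^{\lambda}=\lambda\mathbf{p}_{S_n}^{\lambda}$ by $S_{S_n}(k)$ and using $S_{S_n}(k)D_{S_n}(k)=U^{S}_{S_n}(k)$ shows that $S_{S_n}(k)\mathbf{p}_{S_n}^{\lambda}$ is a (nonzero, by unitarity) eigenvector of $U^{S}_{S_n}(k)$ with the same eigenvalue. Your additional remarks on degeneracy and component ordering are sensible but not needed beyond what the paper's ``reverse the reasoning'' step already provides.
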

So, the wave functions for the quantum star graph can be obtained from
the GFA directly from $\mathbf{p}_{S_n}^{\lambda}$, without the need to
resort to the calculation of the residues of the GF \cite{PR.647.1.2016}.
We can now state our main result about the connection between the maps
$U^{S}_{S_n}(k)$ and $U^{G}_{S_n}(k)$.
\begin{claim}
  \label{thm:thm2}
  The quantum evolution maps $U^{S}_{S_n}(k)$ and $U^{G}_{S_n}(k)$ are
  unitarily similar.
\end{claim}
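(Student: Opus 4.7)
The plan is to exploit the factorizations $U^{S}_{S_n}(k) = S_{S_n}(k)\,D_{S_n}(k)$ and $U^{G}_{S_n}(k) = D_{S_n}(k)\,S_{S_n}(k)$ established just before the claim, together with the fact that both $S_{S_n}(k)$ and $D_{S_n}(k)$ are unitary. The observation is simply that for any two matrices $A,B$ with $B$ invertible, one has $AB = B^{-1}(BA)B$, so $AB$ and $BA$ are always similar, and if in addition $B$ is unitary, they are unitarily similar.

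Concretely, first I would introduce the shorthand $S = S_{S_n}(k)$ and $D = D_{S_n}(k)$, and note that both are unitary, hence $S^{-1} = S^{\dagger}$ and $D^{-1} = D^{\dagger}$. Then I would perform the one-line algebraic manipulation
\begin{equation*}
S^{\dagger} U^{S}_{S_n}(k) S = S^{\dagger}(SD)S = D S = U^{G}_{S_n}(k),
\end{equation*}
which exhibits $S$ as a unitary intertwiner between the two maps. Symmetrically, $D\, U^{S}_{S_n}(k)\, D^{\dagger} = D(SD)D^{\dagger} = DS = U^{G}_{S_n}(k)$, so either of the two unitaries $S$ or $D$ serves as the similarity. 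I would conclude by remarking that this gives a constructive explanation for the earlier theorem on the coincidence of spectra and multiplicities, and for the corollary relating the eigenvectors via $\mathbf{a}_{S_n}^{\lambda} = S\,\mathbf{p}_{S_n}^{\lambda}$.

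Because the statement is essentially the familiar fact that $AB$ and $BA$ are similar whenever one factor is invertible, there is no genuine obstacle in the proof; the only thing worth pointing out is that unitarity (not merely invertibility) of both $S$ and $D$ is what upgrades similarity to unitary similarity, which is the content of the claim. The main ``work'' has already been done earlier: verifying that $S_{S_n}(k)$ is unitary from the relations \eqref{eq:sigma_rel} and that $D_{S_n}(k)$ is trivially unitary as a diagonal matrix of phases.
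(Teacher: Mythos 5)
Your argument is correct, and it is worth emphasizing that it does more than the paper itself does: the paper states this result only as a \emph{Claim}, explicitly says that ``a proof for a general $n$ is not known,'' and verifies it only for $n=2,3,4,5$ by checking trace identities via Specht's theorem. Your route is completely different and fully general. Granting the two facts established earlier in the text --- the factorizations $U^{S}_{S_n}(k)=S_{S_n}(k)D_{S_n}(k)$ and $U^{G}_{S_n}(k)=D_{S_n}(k)S_{S_n}(k)$, and the unitarity of both $S_{S_n}(k)$ and $D_{S_n}(k)$ --- the identity
\begin{equation*}
  S_{S_n}^{\dagger}(k)\,U^{S}_{S_n}(k)\,S_{S_n}(k)
  = S_{S_n}^{\dagger}(k)S_{S_n}(k)D_{S_n}(k)S_{S_n}(k)
  = D_{S_n}(k)S_{S_n}(k) = U^{G}_{S_n}(k)
\end{equation*}
is a one-line computation (and conjugation by $D_{S_n}(k)$ works equally well), which is just the standard fact that $AB$ and $BA$ are similar when a factor is invertible, upgraded to unitary similarity because the factor is unitary. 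What Specht's theorem buys the authors is a similarity-invariant test that does not require exhibiting an intertwiner, but it is only feasible for small $n$; your argument exhibits the intertwiner explicitly, settles the claim for all $n$ at once, and, as you note, immediately recovers the earlier theorem on coincidence of spectra with multiplicities and the corollary $\mathbf{a}_{S_n}^{\lambda}=S_{S_n}(k)\,\mathbf{p}_{S_n}^{\lambda}$. The only point to be scrupulous about is that the similarity is by a $k$-dependent unitary, i.e., the claim is established pointwise in $k$, which is all that is being asserted.
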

Given the properties of these maps, there are strong reasons to believe
that this claim works for every $n$.
Although a proof for a general $n$ is not known, we checked
this for $n=2,3,4,5$ by using Specht's theorem \cite{Book.2012.Horn}.
This theorem provides a necessary and sufficient condition to prove that
two matrices are unitarily similar.
A word $w(s,t)$ is any finite formal product of nonnegative powers of
$s$ and $t$,
$w(s,t) = s^{m_1}  t^{n_1}  s^{m_2}  s^{n_2} \ldots s^{m_k}  s^{n_k}$,
with $ m_1,n_1,\ldots,m_k,n_k \geq 0$.
The length of the word $w(s,t)$ is the nonnegative integer given by
the sum of all exponents in the word, $\sum_{i=1}^{k}(m_i+n_i)$.
\begin{theorem}[Specht's theorem \cite{Book.2012.Horn,LAA.519.278.2017}]
Two $n \times n$ complex matrices $A$ and $B$ are unitarily similar
if and only if
\begin{equation}
  \label{eq:specht}
  \tr w(A,A^{*}) = \tr w(B,B^{*}),
\end{equation}
for every word $w(s,t)$ in two noncommuting variables whose length
is at most \footnote{
For example, for $n=2$, three words need to be checked $w(s,t)=s$;
$s^2$ and $st$, and for $n=3$, seven words need to be checked
$w(s,t)=s$; $s^2$, $st$; $s^3$, $s^2t$; $s^2t^2$; and $s^2t^2st$.}
\begin{equation}
  n \sqrt{\frac{2n^2}{n-1}+\frac{1}{4}}+\frac{n}{2}-2.
\end{equation}
\end{theorem}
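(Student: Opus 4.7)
The plan is to handle the easy direction first and then build up the hard direction in two layers: first Specht's classical trace characterization without any length cutoff, and then the refinement that gives the explicit polynomial length bound. The ``only if'' direction is immediate: if $A = U B U^{*}$ with $U U^{*} = \mathbbm{1}$, then $A^{*} = U B^{*} U^{*}$, so for any word $w(s,t)$ one has $w(A,A^{*}) = U w(B,B^{*}) U^{*}$, and cyclicity of the trace yields $\tr w(A,A^{*}) = \tr w(B,B^{*})$.

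For the nontrivial direction without any length restriction, I would exhibit a trace-preserving unital $*$-isomorphism $\varphi$ between the $*$-subalgebras $\mathcal{A}, \mathcal{B} \subset M_{n}(\mathbb{C})$ generated by $\{A,A^{*}\}$ and $\{B,B^{*}\}$, and then upgrade it to an inner automorphism of the ambient matrix algebra. Set $\varphi(p(A,A^{*})) := p(B,B^{*})$ for every noncommutative polynomial $p$ in two variables. Well-definedness reduces to showing that $p(A,A^{*}) = 0$ implies $p(B,B^{*}) = 0$: apply the trace equality to the word $p^{*}p$; since $\tr(q^{*}q)$ is the sum of squared absolute values of the entries of $q$, it vanishes iff $q$ itself vanishes, so the kernels match and $\varphi$ descends to a well-defined $*$-homomorphism that by construction preserves the trace.

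Next, lift $\varphi$ to a unitary conjugation on $M_{n}(\mathbb{C})$. By the Artin--Wedderburn decomposition of finite-dimensional $*$-subalgebras, $\mathcal{A} \cong \bigoplus_{i} M_{k_{i}}(\mathbb{C})$, and its inclusion into $M_{n}(\mathbb{C})$ is determined up to unitary conjugation by the multiplicities $m_{i}$ with which each irreducible representation appears in $\mathbb{C}^{n}$, subject to $\sum_{i} k_{i} m_{i} = n$. These multiplicities are read off from traces of the minimal central projections of $\mathcal{A}$, which are themselves polynomials in $A, A^{*}$; trace-preservation of $\varphi$ therefore forces the multiplicity data on the two sides to agree, so the two embeddings are unitarily conjugate, which is exactly a unitary similarity $A \mapsto B$.

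The genuinely hard step is the length bound. Since $\dim \mathcal{A} \le n^{2}$, words in $A, A^{*}$ of sufficiently large length are linear combinations of shorter ones in $\mathcal{A}$, so their traces are determined by traces of shorter words. A crude dimension count already yields some finite bound, but sharpening it to the stated value $n\sqrt{2n^{2}/(n-1) + 1/4} + n/2 - 2$ requires estimating the shortest length needed for monomials in two noncommuting generators to span the generated algebra, modulo the polynomial identities of $n \times n$ matrices. This is a combinatorial-algebraic problem, handled by the Paz/Pappas-type argument of \cite{LAA.519.278.2017} via a directed graph on equivalence classes of words whose longest path controls the required cutoff. I expect the $*$-algebraic skeleton above to go through routinely; the real obstacle is extracting the precise quantitative length $L(n)$, which needs this finer combinatorial input rather than pure structure theory.
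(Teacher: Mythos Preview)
The paper does not prove Specht's theorem: it is stated without proof as a cited result from \cite{Book.2012.Horn,LAA.519.278.2017}, and is used only as a black-box criterion to verify Claim~\ref{thm:thm2} for $n=2,3,4,5$. So there is no ``paper's own proof'' to compare against.

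As for your sketch on its own terms, the qualitative part is essentially the standard argument and is sound. The well-definedness step is correct once you note that $p^{*}p$ is not itself a word but expands linearly into words, so the hypothesis applied to all words still gives $\tr\big(p(A,A^{*})^{*}p(A,A^{*})\big)=\tr\big(p(B,B^{*})^{*}p(B,B^{*})\big)$. The Artin--Wedderburn lift is also the right idea, with the multiplicities read off from traces of central projections.

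Where you are honest about the gap is exactly where it lies: the specific bound $n\sqrt{2n^{2}/(n-1)+1/4}+n/2-2$ is Pappacena's length estimate for generating the algebra, and it does not fall out of structure theory alone. Your reference to a ``Paz/Pappas-type argument'' points in the right direction (Paz's theorem and Pappacena's refinement), but as written you have deferred the entire quantitative content to the literature rather than supplied it. That is acceptable if the intent is to cite \cite{LAA.519.278.2017}, which is precisely what the paper itself does; it would not constitute a self-contained proof of the stated bound.
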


Given the fact that the GF is obtained from the solution of the system
of equations in \eqref{eq:system_PSn}, its final form has a important
contribution from the  secular determinant.
In fact, the GF for a quantum star graph on $n$ vertices is seen to be
\begin{equation}
  G_{S_n}= \frac{m}{i\hbar^2 k }\frac{1}{g_{S_n}}
  \prod_{i\in E_{1}} (g_{1i}+r_{i} t_{1} z_{1i}^{2})t_{1}t_{n}z_{1n}
  e^{ik(x_i+x_f)},
\end{equation}
where $g_{1i}=1-r_{1} r_{i} z_{1i}^2$ and $g_{S_n}$ is the secular
determinant in \eqref{eq:secular_SnG}.
So, the eigenvalues are the poles of the GF and these poles
are just the zeros of the secular determinant.
Thus the secular determinant for a quantum star graph on $n$ vertices,
with general boundary conditions, is obtained directly from Eq.
\eqref{eq:evSnG}.
Moreover, the poles have contribution from the classical periodic orbits
of the graph.
We can exemplify this with the quantum star graph $S_3$, for which
$g_{S_3}= (1-r_{1} r_{2} z_{12}^2)(1-r_{1} r_{3} z_{13}^2)
- r_{2} r_{3} t_{1}^{2} z_{12}^{2} z_{13}^{2}$, and it is possible to
see the  contribution of three periodic orbits: one confined in the edge
$\{1,2\}$, another one confined in the edge  $\{1,3\}$, and the last one
that covers the entire graph (see Fig. \ref{fig:fig4}).

\begin{figure}
  \centering
  \includegraphics*[width=0.33\columnwidth]{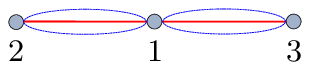}%
  \qquad
  \includegraphics*[width=0.33\columnwidth]{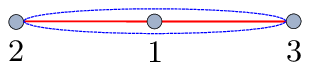}\\
  \caption{Periodic orbits for the quantum star graph $S_{3}$.}
  \label{fig:fig4}
\end{figure}

Finally, we can write a trace formula from the GFA by considering
the secular determinant \eqref{eq:secular_SnG}.
The spectral counting function $N(k)$ is given
by \cite{PRL.79.4794.1997,AoP.274.76.1999}
\begin{equation}
  N(k)= \bar{N}(k) +
  \frac{1}{\pi}\Im \sum_{\nu=1}^{\infty}\frac{1}{\nu}
  \tr\left[U_{S_n}^{G}(k)\right]^{\nu},
\end{equation}
where $\bar{N}(k)$ corresponds to the smooth part of the counting
function and the second term is the oscillatory part.
Since the main diagonal of $U_{S_n}^{G}(k)$ is zero,
$\tr\left[U_{S_n}^{G}(k)\right]^{2\nu+1}=0$ and
$\tr\left[U_{S_n}^{G}(k)\right]^{2\nu}=
2 \sum_{{p}\in\mathcal{P_{\nu}}} W_{p} e^{i k \ell_{p}}$,
where $W_{p}$ is the product of quantum amplitudes along the
periodic orbit, $\ell_{p}$ is the length of the periodic orbit, and
$\mathcal{P}_{\nu}$ is the set of periodic orbits of the graph.
For the Neumann BC, $\bar{N}= k \mathcal{L}/2\pi+1/2$
\cite{AoP.274.76.1999}, where $\mathcal{L} = 2\sum_{E} \ell_{ij}$, and
we can write the density of states $d(k)=d N(k)/dk$ as
\begin{equation}
  d(k)=
  \frac{\mathcal{L}}{2\pi} +
    \Im \frac{\ell_{p}}{\pi}
    \sum_{\nu=1}^{\infty}\frac{1}{\nu}
    \sum_{p\in\mathcal{P_{\nu}}} W_{p} e^{i k \ell_{p}}.
  \end{equation}

In summary, we have introduced a general and powerful approach for the
construction of the GF for quantum graphs based on the adjacency matrices
of the graphs.
This provides another way to obtain the secular determinant, unraveling a
unitary equivalence between the SSA and GFA.
An advantage of the GFA is that the system that leads to the secular
determinant is obtained in a very direct manner and for general energy
dependent scattering amplitudes (general BCs).
It also provides us a connection between the poles of the GF and the
secular determinant, and enables us to write a trace formula for quantum
graphs from the GFA.
Moreover, our approach can be used to study quantum walks in graphs with
complicated topologies.
This subject was studied by one of us in simple topologies in Ref. \cite{PRA.84.042343.2011}.
Furthermore, for dressed quantum graphs
\cite{PRL.88.044101.2002,PRE.65.046222.2002}, i.e., when there are
potentials $u_{ij}$ along the edges, our method can provide very good
analytical approximations for the Green's function
\cite{JPA.34.5041.2001,JPA.36.227.2002} and exact Green's function for
piecewise constant potentials \cite{PLA.378.1461.2014},
thus showing the versatility and generality of the approach developed
in this work.
These and related issues will be reported in future works
\cite{inpreparation}.

This work was partially supported by the Brazilian agencies
CNPq (Grant No. 313274/2017-7),
Funda\c{c}\~{a}o Arauc\'{a}ria (Grant No. 09/2017), and
Instituto Nacional de Ci\^{e}ncia e Tecnologia de Informa\c{c}\~{a}o
Qu\^{a}ntica (INCT-IQ).
F.M.A. thanks Dionisio Bazeia for critical reading of the manuscript and 
thanks UCL CSQ group for hospitality where part of this work was done.


\begin{thebibliography}{34}%
\makeatletter
\providecommand \@ifxundefined [1]{%
 \@ifx{#1\undefined}
}%
\providecommand \@ifnum [1]{%
 \ifnum #1\expandafter \@firstoftwo
 \else \expandafter \@secondoftwo
 \fi
}%
\providecommand \@ifx [1]{%
 \ifx #1\expandafter \@firstoftwo
 \else \expandafter \@secondoftwo
 \fi
}%
\providecommand \natexlab [1]{#1}%
\providecommand \enquote  [1]{``#1''}%
\providecommand \bibnamefont  [1]{#1}%
\providecommand \bibfnamefont [1]{#1}%
\providecommand \citenamefont [1]{#1}%
\providecommand \href@noop [0]{\@secondoftwo}%
\providecommand \href [0]{\begingroup \@sanitize@url \@href}%
\providecommand \@href[1]{\@@startlink{#1}\@@href}%
\providecommand \@@href[1]{\endgroup#1\@@endlink}%
\providecommand \@sanitize@url [0]{\catcode `\\12\catcode `\$12\catcode
  `\&12\catcode `\#12\catcode `\^12\catcode `\_12\catcode `\%12\relax}%
\providecommand \@@startlink[1]{}%
\providecommand \@@endlink[0]{}%
\providecommand \url  [0]{\begingroup\@sanitize@url \@url }%
\providecommand \@url [1]{\endgroup\@href {#1}{\urlprefix }}%
\providecommand \urlprefix  [0]{URL }%
\providecommand \Eprint [0]{\href }%
\providecommand \doibase [0]{http://dx.doi.org/}%
\providecommand \selectlanguage [0]{\@gobble}%
\providecommand \bibinfo  [0]{\@secondoftwo}%
\providecommand \bibfield  [0]{\@secondoftwo}%
\providecommand \translation [1]{[#1]}%
\providecommand \BibitemOpen [0]{}%
\providecommand \bibitemStop [0]{}%
\providecommand \bibitemNoStop [0]{.\EOS\space}%
\providecommand \EOS [0]{\spacefactor3000\relax}%
\providecommand \BibitemShut  [1]{\csname bibitem#1\endcsname}%
\let\auto@bib@innerbib\@empty
\bibitem [{\citenamefont {Pauling}(1936)}]{JCP.4.673.1936}%
  \BibitemOpen
  \bibfield  {author} {\bibinfo {author} {\bibfnamefont {L.}~\bibnamefont
  {Pauling}},\ }\href {\doibase 10.1063/1.1749766} {\bibfield  {journal}
  {\bibinfo  {journal} {J. Chem. Phys.}\ }\textbf {\bibinfo {volume} {4}},\
  \bibinfo {pages} {673} (\bibinfo {year} {1936})}\BibitemShut {NoStop}%
\bibitem [{\citenamefont {Ruedenberg}\ and\ \citenamefont
  {Scherr}(1953)}]{JCP.21.1565.1953}%
  \BibitemOpen
  \bibfield  {author} {\bibinfo {author} {\bibfnamefont {K.}~\bibnamefont
  {Ruedenberg}}\ and\ \bibinfo {author} {\bibfnamefont {C.~W.}\ \bibnamefont
  {Scherr}},\ }\href {\doibase 10.1063/1.1699299} {\bibfield  {journal}
  {\bibinfo  {journal} {J. Chem. Phys.}\ }\textbf {\bibinfo {volume} {21}},\
  \bibinfo {pages} {1565} (\bibinfo {year} {1953})}\BibitemShut {NoStop}%
\bibitem [{\citenamefont {Roth}(1983)}]{CRASPSIM.296.793.1983}%
  \BibitemOpen
  \bibfield  {author} {\bibinfo {author} {\bibfnamefont {J.-P.}\ \bibnamefont
  {Roth}},\ }\href@noop {} {\bibfield  {journal} {\bibinfo  {journal} {C. R.
  Acad. Sci. Paris S\'{e}r. I Math.}\ }\textbf {\bibinfo {volume} {296}},\
  \bibinfo {pages} {793} (\bibinfo {year} {1983})}\BibitemShut {NoStop}%
\bibitem [{\citenamefont {Kottos}\ and\ \citenamefont
  {Smilansky}(1997)}]{PRL.79.4794.1997}%
  \BibitemOpen
  \bibfield  {author} {\bibinfo {author} {\bibfnamefont {T.}~\bibnamefont
  {Kottos}}\ and\ \bibinfo {author} {\bibfnamefont {U.}~\bibnamefont
  {Smilansky}},\ }\href {\doibase 10.1103/PhysRevLett.79.4794} {\bibfield
  {journal} {\bibinfo  {journal} {Phys. Rev. Lett.}\ }\textbf {\bibinfo
  {volume} {79}},\ \bibinfo {pages} {4794} (\bibinfo {year}
  {1997})}\BibitemShut {NoStop}%
\bibitem [{\citenamefont {Kottos}\ and\ \citenamefont
  {Smilansky}(1999)}]{AoP.274.76.1999}%
  \BibitemOpen
  \bibfield  {author} {\bibinfo {author} {\bibfnamefont {T.}~\bibnamefont
  {Kottos}}\ and\ \bibinfo {author} {\bibfnamefont {U.}~\bibnamefont
  {Smilansky}},\ }\href {\doibase 10.1006/aphy.1999.5904} {\bibfield  {journal}
  {\bibinfo  {journal} {Ann. Phys. (NY)}\ }\textbf {\bibinfo {volume} {274}},\
  \bibinfo {pages} {76} (\bibinfo {year} {1999})}\BibitemShut {NoStop}%
\bibitem [{\citenamefont {Schanz}\ and\ \citenamefont
  {Smilansky}(2000)}]{PRL.84.1427.2000}%
  \BibitemOpen
  \bibfield  {author} {\bibinfo {author} {\bibfnamefont {H.}~\bibnamefont
  {Schanz}}\ and\ \bibinfo {author} {\bibfnamefont {U.}~\bibnamefont
  {Smilansky}},\ }\href {\doibase 10.1103/PhysRevLett.84.1427} {\bibfield
  {journal} {\bibinfo  {journal} {Phys. Rev. Lett.}\ }\textbf {\bibinfo
  {volume} {84}},\ \bibinfo {pages} {1427} (\bibinfo {year}
  {2000})}\BibitemShut {NoStop}%
\bibitem [{\citenamefont {Schmidt}\ \emph {et~al.}(2003)\citenamefont
  {Schmidt}, \citenamefont {Cheng},\ and\ \citenamefont
  {da~Luz}}]{JPA.36.545.2003}%
  \BibitemOpen
  \bibfield  {author} {\bibinfo {author} {\bibfnamefont {A.~G.~M.}\
  \bibnamefont {Schmidt}}, \bibinfo {author} {\bibfnamefont {B.~K.}\
  \bibnamefont {Cheng}}, \ and\ \bibinfo {author} {\bibfnamefont {M.~G.~E.}\
  \bibnamefont {da~Luz}},\ }\href {\doibase 10.1088/0305-4470/36/42/L01}
  {\bibfield  {journal} {\bibinfo  {journal} {J. Phys. A}\ }\textbf {\bibinfo
  {volume} {36}},\ \bibinfo {pages} {L545} (\bibinfo {year}
  {2003})}\BibitemShut {NoStop}%
\bibitem [{\citenamefont {Andrade}\ \emph {et~al.}(2016)\citenamefont
  {Andrade}, \citenamefont {Schmidt}, \citenamefont {Vicentini}, \citenamefont
  {Cheng},\ and\ \citenamefont {da~Luz}}]{PR.647.1.2016}%
  \BibitemOpen
  \bibfield  {author} {\bibinfo {author} {\bibfnamefont {F.~M.}\ \bibnamefont
  {Andrade}}, \bibinfo {author} {\bibfnamefont {A.~G.~M.}\ \bibnamefont
  {Schmidt}}, \bibinfo {author} {\bibfnamefont {E.}~\bibnamefont {Vicentini}},
  \bibinfo {author} {\bibfnamefont {B.~K.}\ \bibnamefont {Cheng}}, \ and\
  \bibinfo {author} {\bibfnamefont {M.~G.~E.}\ \bibnamefont {da~Luz}},\ }\href
  {\doibase 10.1016/j.physrep.2016.07.001} {\bibfield  {journal} {\bibinfo
  {journal} {Phys. Rep.}\ }\textbf {\bibinfo {volume} {647}},\ \bibinfo {pages}
  {1} (\bibinfo {year} {2016})}\BibitemShut {NoStop}%
\bibitem [{\citenamefont {Kuchment}(2008)}]{PSPM.77.291.2008}%
  \BibitemOpen
  \bibfield  {author} {\bibinfo {author} {\bibfnamefont {P.}~\bibnamefont
  {Kuchment}},\ }\bibfield  {booktitle} {\emph {\bibinfo {booktitle} {Analysis
  on Graphs and Its Applications}},\ }\href@noop {} {\bibfield  {journal}
  {\bibinfo  {journal} {Proc. Symp. Pure. Math.}\ }\textbf {\bibinfo {volume}
  {77}},\ \bibinfo {pages} {291} (\bibinfo {year} {2008})},\ \Eprint
  {http://arxiv.org/abs/0802.3442} {0802.3442} \BibitemShut {NoStop}%
\bibitem [{\citenamefont {Berkolaiko}\ and\ \citenamefont
  {Kuchment}(2012)}]{Book.2012.Berkolaiko}%
  \BibitemOpen
  \bibfield  {author} {\bibinfo {author} {\bibfnamefont {G.}~\bibnamefont
  {Berkolaiko}}\ and\ \bibinfo {author} {\bibfnamefont {P.}~\bibnamefont
  {Kuchment}},\ }\href@noop {} {\emph {\bibinfo {title} {Introduction to
  Quantum Graphs}}}\ (\bibinfo  {publisher} {American Mathematical Society},\
  \bibinfo {year} {2012})\BibitemShut {NoStop}%
\bibitem [{\citenamefont {Berkolaiko}(2016)}]{arXiv:1603.07356}%
  \BibitemOpen
  \bibfield  {author} {\bibinfo {author} {\bibfnamefont {G.}~\bibnamefont
  {Berkolaiko}},\ }\href@noop {} {\bibfield  {journal} {\bibinfo  {journal}
  {arXiv}\ } (\bibinfo {year} {2016})},\ \Eprint
  {http://arxiv.org/abs/1603.07356} {1603.07356} \BibitemShut {NoStop}%
\bibitem [{\citenamefont {Bolte}\ and\ \citenamefont
  {Endres}(2009)}]{AHP.10.189.2009}%
  \BibitemOpen
  \bibfield  {author} {\bibinfo {author} {\bibfnamefont {J.}~\bibnamefont
  {Bolte}}\ and\ \bibinfo {author} {\bibfnamefont {S.}~\bibnamefont {Endres}},\
  }\href {\doibase 10.1007/s00023-009-0399-7} {\bibfield  {journal} {\bibinfo
  {journal} {Ann. Henri Poincar{\'{e}}}\ }\textbf {\bibinfo {volume} {10}},\
  \bibinfo {pages} {189} (\bibinfo {year} {2009})}\BibitemShut {NoStop}%
\bibitem [{\citenamefont {Kostrykin}\ \emph {et~al.}(2007)\citenamefont
  {Kostrykin}, \citenamefont {Potthoff},\ and\ \citenamefont
  {Schrader}}]{CM.447.175.2007}%
  \BibitemOpen
  \bibfield  {author} {\bibinfo {author} {\bibfnamefont {V.}~\bibnamefont
  {Kostrykin}}, \bibinfo {author} {\bibfnamefont {J.}~\bibnamefont {Potthoff}},
  \ and\ \bibinfo {author} {\bibfnamefont {R.}~\bibnamefont {Schrader}},\
  }\href {\doibase 10.1090/conm/447/08691} {\bibfield  {journal} {\bibinfo
  {journal} {Contemp. Math.}\ }\textbf {\bibinfo {volume} {447}},\ \bibinfo
  {pages} {175} (\bibinfo {year} {2007})}\BibitemShut {NoStop}%
\bibitem [{\citenamefont {Diestel}(2010)}]{Book.2010.Diestel}%
  \BibitemOpen
  \bibfield  {author} {\bibinfo {author} {\bibfnamefont {R.}~\bibnamefont
  {Diestel}},\ }\href {http://books.google.com.mt/books?id=NvRXJSl9hUUC} {\emph
  {\bibinfo {title} {Graph Theory}}},\ \bibinfo {edition} {4th}\ ed.,\ Graduate
  Texts in Mathematics Vol. 173\ (\bibinfo  {publisher} {Springer},\ \bibinfo
  {year} {2010})\BibitemShut {NoStop}%
\bibitem [{\citenamefont {Kostrykin}\ and\ \citenamefont
  {Schrader}(1999)}]{JPA.32.595.1999}%
  \BibitemOpen
  \bibfield  {author} {\bibinfo {author} {\bibfnamefont {V.}~\bibnamefont
  {Kostrykin}}\ and\ \bibinfo {author} {\bibfnamefont {R.}~\bibnamefont
  {Schrader}},\ }\href {\doibase 10.1088/0305-4470/32/4/006} {\bibfield
  {journal} {\bibinfo  {journal} {J. Phys. A}\ }\textbf {\bibinfo {volume}
  {32}},\ \bibinfo {pages} {595} (\bibinfo {year} {1999})}\BibitemShut
  {NoStop}%
\bibitem [{\citenamefont {Chadan}\ and\ \citenamefont
  {Sabatier}(1989)}]{Book.1989.Chadan}%
  \BibitemOpen
  \bibfield  {author} {\bibinfo {author} {\bibfnamefont {K.}~\bibnamefont
  {Chadan}}\ and\ \bibinfo {author} {\bibfnamefont {P.~C.}\ \bibnamefont
  {Sabatier}},\ }\href@noop {} {\emph {\bibinfo {title} {Inverse Problems in
  Quantum Scattering Theory}}},\ \bibinfo {edition} {2nd}\ ed.\ (\bibinfo
  {publisher} {Springer},\ \bibinfo {year} {1989})\BibitemShut {NoStop}%
\bibitem [{\citenamefont {Kostrykin}\ and\ \citenamefont
  {Schrader}(2006)}]{CM.415.201.2006}%
  \BibitemOpen
  \bibfield  {author} {\bibinfo {author} {\bibfnamefont {V.}~\bibnamefont
  {Kostrykin}}\ and\ \bibinfo {author} {\bibfnamefont {R.}~\bibnamefont
  {Schrader}},\ }\href {\doibase 10.1090/conm/415/07870} {\bibfield  {journal}
  {\bibinfo  {journal} {Contemp. Math.}\ }\textbf {\bibinfo {volume} {415}},\
  \bibinfo {pages} {201} (\bibinfo {year} {2006})}\BibitemShut {NoStop}%
\bibitem [{Note1()}]{Note1}%
  \BibitemOpen
  \bibinfo {note} {Given two matrices $M$ and $N$ of the same order, the
  Schur-Hadamard product is defined by $ (M \circ N)_{ij} :=
  (M)_{ij}(N)_{ij}$.}\BibitemShut {Stop}%
\bibitem [{\citenamefont {Economou}(2006)}]{Book.2006.Economou}%
  \BibitemOpen
  \bibfield  {author} {\bibinfo {author} {\bibfnamefont {E.~N.}\ \bibnamefont
  {Economou}},\ }\href@noop {} {\emph {\bibinfo {title} {Green's Functions In
  Quantum Physics}}},\ \bibinfo {edition} {3rd}\ ed.\ (\bibinfo  {publisher}
  {Springer-Verlag},\ \bibinfo {year} {2006})\BibitemShut {NoStop}%
\bibitem [{\citenamefont {Waltner}\ and\ \citenamefont
  {Smilansky}(2013)}]{APPA.124.1087.2013}%
  \BibitemOpen
  \bibfield  {author} {\bibinfo {author} {\bibfnamefont {D.}~\bibnamefont
  {Waltner}}\ and\ \bibinfo {author} {\bibfnamefont {U.}~\bibnamefont
  {Smilansky}},\ }\href {\doibase 10.12693/APhysPolA.124.1087} {\bibfield
  {journal} {\bibinfo  {journal} {Acta. Phys. Pol. A}\ }\textbf {\bibinfo
  {volume} {124}},\ \bibinfo {pages} {1087} (\bibinfo {year}
  {2013})}\BibitemShut {NoStop}%
\bibitem [{\citenamefont {Ragoucy}(2009)}]{JPA.42.295205.2009}%
  \BibitemOpen
  \bibfield  {author} {\bibinfo {author} {\bibfnamefont {E.}~\bibnamefont
  {Ragoucy}},\ }\href {\doibase 10.1088/1751-8113/42/29/295205} {\bibfield
  {journal} {\bibinfo  {journal} {J. Phys. A}\ }\textbf {\bibinfo {volume}
  {42}},\ \bibinfo {pages} {295205} (\bibinfo {year} {2009})}\BibitemShut
  {NoStop}%
\bibitem [{\citenamefont {Caudrelier}\ and\ \citenamefont
  {Ragoucy}(2010)}]{NPB.828.515.2010}%
  \BibitemOpen
  \bibfield  {author} {\bibinfo {author} {\bibfnamefont {V.}~\bibnamefont
  {Caudrelier}}\ and\ \bibinfo {author} {\bibfnamefont {E.}~\bibnamefont
  {Ragoucy}},\ }\href {\doibase 10.1016/j.nuclphysb.2009.10.012} {\bibfield
  {journal} {\bibinfo  {journal} {Nucl. Phys. B}\ }\textbf {\bibinfo {volume}
  {828}},\ \bibinfo {pages} {515} (\bibinfo {year} {2010})}\BibitemShut
  {NoStop}%
\bibitem [{\citenamefont {Gnutzmann}\ and\ \citenamefont
  {Smilansky}(2006)}]{AP.55.527.2006}%
  \BibitemOpen
  \bibfield  {author} {\bibinfo {author} {\bibfnamefont {S.}~\bibnamefont
  {Gnutzmann}}\ and\ \bibinfo {author} {\bibfnamefont {U.}~\bibnamefont
  {Smilansky}},\ }\href {\doibase 10.1080/00018730600908042} {\bibfield
  {journal} {\bibinfo  {journal} {Adv. Phys.}\ }\textbf {\bibinfo {volume}
  {55}},\ \bibinfo {pages} {527} (\bibinfo {year} {2006})}\BibitemShut
  {NoStop}%
\bibitem [{Note2()}]{Note2}%
  \BibitemOpen
  \bibinfo {note} {In other words, we label our vertices in a such way that the
  entrance vertex is labeled $1$ and the exit vertex is labeled
  $n$.}\BibitemShut {Stop}%
\bibitem [{\citenamefont {Roger A.~Horn}(2012)}]{Book.2012.Horn}%
  \BibitemOpen
  \bibfield  {author} {\bibinfo {author} {\bibfnamefont {C.~R.~J.}\
  \bibnamefont {Roger A.~Horn}},\ }\href
  {https://www.ebook.de/de/product/19322105/roger_a_horn_charles_r_johnson_matrix_analysis.html}
  {\emph {\bibinfo {title} {Matrix Analysis}}}\ (\bibinfo  {publisher}
  {Cambridge University Press},\ \bibinfo {year} {2012})\BibitemShut {NoStop}%
\bibitem [{\citenamefont {Futorny}\ \emph {et~al.}(2017)\citenamefont
  {Futorny}, \citenamefont {Horn},\ and\ \citenamefont
  {Sergeichuk}}]{LAA.519.278.2017}%
  \BibitemOpen
  \bibfield  {author} {\bibinfo {author} {\bibfnamefont {V.}~\bibnamefont
  {Futorny}}, \bibinfo {author} {\bibfnamefont {R.~A.}\ \bibnamefont {Horn}}, \
  and\ \bibinfo {author} {\bibfnamefont {V.~V.}\ \bibnamefont {Sergeichuk}},\
  }\href {\doibase 10.1016/j.laa.2017.01.006} {\bibfield  {journal} {\bibinfo
  {journal} {Linear Algebra Appl.}\ }\textbf {\bibinfo {volume} {519}},\
  \bibinfo {pages} {278} (\bibinfo {year} {2017})}\BibitemShut {NoStop}%
\bibitem [{Note3()}]{Note3}%
  \BibitemOpen
  \bibinfo {note} {For example, for $n=2$, three words need to be checked
  $w(s,t)=s$; $s^2$ and $st$, and for $n=3$, seven words need to be checked
  $w(s,t)=s$; $s^2$, $st$; $s^3$, $s^2t$; $s^2t^2$; and
  $s^2t^2st$.}\BibitemShut {Stop}%
\bibitem [{\citenamefont {Andrade}\ and\ \citenamefont
  {da~Luz}(2011)}]{PRA.84.042343.2011}%
  \BibitemOpen
  \bibfield  {author} {\bibinfo {author} {\bibfnamefont {F.~M.}\ \bibnamefont
  {Andrade}}\ and\ \bibinfo {author} {\bibfnamefont {M.~G.~E.}\ \bibnamefont
  {da~Luz}},\ }\href {\doibase 10.1103/PhysRevA.84.042343} {\bibfield
  {journal} {\bibinfo  {journal} {Phys. Rev. A}\ }\textbf {\bibinfo {volume}
  {84}},\ \bibinfo {pages} {042343} (\bibinfo {year} {2011})}\BibitemShut
  {NoStop}%
\bibitem [{\citenamefont {Bl{\"{u}}mel}\ \emph {et~al.}(2002)\citenamefont
  {Bl{\"{u}}mel}, \citenamefont {Dabaghian},\ and\ \citenamefont
  {Jensen}}]{PRL.88.044101.2002}%
  \BibitemOpen
  \bibfield  {author} {\bibinfo {author} {\bibfnamefont {R.}~\bibnamefont
  {Bl{\"{u}}mel}}, \bibinfo {author} {\bibfnamefont {Y.}~\bibnamefont
  {Dabaghian}}, \ and\ \bibinfo {author} {\bibfnamefont {R.~V.}\ \bibnamefont
  {Jensen}},\ }\href {\doibase 10.1103/physrevlett.88.044101} {\bibfield
  {journal} {\bibinfo  {journal} {Phys. Rev. Lett.}\ }\textbf {\bibinfo
  {volume} {88}},\ \bibinfo {pages} {044101} (\bibinfo {year}
  {2002})}\BibitemShut {NoStop}%
\bibitem [{\citenamefont {Bl\"umel}\ \emph {et~al.}(2002)\citenamefont
  {Bl\"umel}, \citenamefont {Dabaghian},\ and\ \citenamefont
  {Jensen}}]{PRE.65.046222.2002}%
  \BibitemOpen
  \bibfield  {author} {\bibinfo {author} {\bibfnamefont {R.}~\bibnamefont
  {Bl\"umel}}, \bibinfo {author} {\bibfnamefont {Y.}~\bibnamefont {Dabaghian}},
  \ and\ \bibinfo {author} {\bibfnamefont {R.~V.}\ \bibnamefont {Jensen}},\
  }\href {\doibase 10.1103/PhysRevE.65.046222} {\bibfield  {journal} {\bibinfo
  {journal} {Phys. Rev. E}\ }\textbf {\bibinfo {volume} {65}},\ \bibinfo
  {pages} {046222} (\bibinfo {year} {2002})}\BibitemShut {NoStop}%
\bibitem [{\citenamefont {da~Luz}\ \emph {et~al.}(2001)\citenamefont {da~Luz},
  \citenamefont {Cheng},\ and\ \citenamefont {Beims}}]{JPA.34.5041.2001}%
  \BibitemOpen
  \bibfield  {author} {\bibinfo {author} {\bibfnamefont {M.~G.~E.}\
  \bibnamefont {da~Luz}}, \bibinfo {author} {\bibfnamefont {B.~K.}\
  \bibnamefont {Cheng}}, \ and\ \bibinfo {author} {\bibfnamefont {M.~W.}\
  \bibnamefont {Beims}},\ }\href {\doibase 10.1088/0305-4470/34/24/303}
  {\bibfield  {journal} {\bibinfo  {journal} {J. Phys. A}\ }\textbf {\bibinfo
  {volume} {34}},\ \bibinfo {pages} {5041} (\bibinfo {year}
  {2001})}\BibitemShut {NoStop}%
\bibitem [{\citenamefont {Andrade}\ \emph {et~al.}(2002)\citenamefont
  {Andrade}, \citenamefont {Cheng}, \citenamefont {Beims},\ and\ \citenamefont
  {da~Luz}}]{JPA.36.227.2002}%
  \BibitemOpen
  \bibfield  {author} {\bibinfo {author} {\bibfnamefont {F.~M.}\ \bibnamefont
  {Andrade}}, \bibinfo {author} {\bibfnamefont {B.~K.}\ \bibnamefont {Cheng}},
  \bibinfo {author} {\bibfnamefont {M.~W.}\ \bibnamefont {Beims}}, \ and\
  \bibinfo {author} {\bibfnamefont {M.~G.~E.}\ \bibnamefont {da~Luz}},\ }\href
  {\doibase 10.1088/0305-4470/36/1/315} {\bibfield  {journal} {\bibinfo
  {journal} {J. Phys. A}\ }\textbf {\bibinfo {volume} {36}},\ \bibinfo {pages}
  {227} (\bibinfo {year} {2002})}\BibitemShut {NoStop}%
\bibitem [{\citenamefont {Andrade}(2014)}]{PLA.378.1461.2014}%
  \BibitemOpen
  \bibfield  {author} {\bibinfo {author} {\bibfnamefont {F.~M.}\ \bibnamefont
  {Andrade}},\ }\href {\doibase 10.1016/j.physleta.2014.03.042} {\bibfield
  {journal} {\bibinfo  {journal} {Phys. Lett. A}\ }\textbf {\bibinfo {volume}
  {378}},\ \bibinfo {pages} {1461} (\bibinfo {year} {2014})}\BibitemShut
  {NoStop}%
\bibitem [{\citenamefont {Andrade}\ and\ \citenamefont
  {Severini}()}]{inpreparation}%
  \BibitemOpen
  \bibfield  {author} {\bibinfo {author} {\bibfnamefont {F.~M.}\ \bibnamefont
  {Andrade}}\ and\ \bibinfo {author} {\bibfnamefont {S.}~\bibnamefont
  {Severini}},\ }\href@noop {} {\emph {\bibinfo {title} {(in
  preparation)}}}\BibitemShut {NoStop}%
\end{thebibliography}
%

\end{document}